\newcommand{\F}{\mathbb{F}}
\newtheorem{definition}{Definition} 
\newtheorem{corollary}{Corollary}
\newtheorem{lemma}{Lemma} 
\newtheorem{theorem}{Theorem}
\begin{document}
\title{On the Hardness of the Determinant: Sum of Regular Set-Multilinear Circuits}
%
%
\author{S. Raja\thanks{IIT Tirupati, India,
    \texttt{email: raja@iittp.ac.in}} \and Sumukha Bharadwaj G V\thanks{IIT Tirupati, India,
    \texttt{email: cs21d003@iittp.ac.in}}}
%
%
%

%
\date{}
\maketitle              
\begin{abstract}
	In this paper, we study the computational complexity of the commutative determinant polynomial computed by a class of set-multilinear circuits which we call \emph{regular set-multilinear circuits}. Regular set-multilinear circuits are commutative  circuits with a restriction on the order in which they can compute polynomials. A regular circuit can be seen as the commutative analogue of the \emph{ordered circuit} defined by Hrubes,Wigderson and Yehudayoff \cite{HWY10}. We show that if the commutative determinant polynomial has small representation in the sum of constantly many regular set-multilinear circuits, then the commutative permanent polynomial also has a small arithmetic circuit. 
\end{abstract}
\section{Introduction}
\emph{Arithmetic circuit complexity} studies the complexity of computing polynomials using arithmetic
operations. Arithmetic circuits are a natural computational model for
computing and describing polynomials. Arithmetic circuit is a directed acyclic graph with internal nodes labeled by + or $\times$, and leaves labeled by either variables or elements from a underlying field $\mathbb{F}$. The complexity measures associated with arithmetic circuits are size, which measures number of gates in the circuit, and depth, which measures length of the longest path from a leaf to the output gate in the circuit. Two important examples of polynomial family are the determinant and the permanent polynomials. The determinant polynomial is ubiquitous in linear algebra, and it can
be computed by polynomial-sized arithmetic circuits (see e.g., \cite{Ber84}). On the other
hand, the permanent of 0/1 matrices is \#P-complete \cite{Val79b}, where \#P corresponds to the counting class in the world of Boolean complexity classes. Thus, it is believed that, over fields of characteristic different from 2, the permanent $PERM = (PERM_n)$ polynomial family cannot be computed by any polynomial-sized circuit family. A central open problem of the field is proving super-polynomial size lower bounds for arithmetic circuits that compute the permanent polynomial $PERM_n$. Motivated by this problem, Valiant, in his seminal work \cite{Val79a}, defined the arithmetic analogues of P and NP: denoted by VP and VNP. Informally, VP consists of multivariate (commutative) polynomials
that have polynomial size circuits. Valiant showed that $PERM$ is VNP-complete
w.r.t. projection reductions. Thus, $VP\neq VNP$ iff $PERM_n$ requires arithmetic circuits of size super-polynomial in $n$.

Set-multilinear circuits are introduced in the work of \cite{NW95}.
Let $\mathbb{F}$ be a field and $X=X_1\sqcup X_2\sqcup\dots\sqcup X_d$ be a
partition of the variable set $X$. A \emph{set-multilinear polynomial}
$f\in\mathbb{F}[X]$ w.r.t.\ this partition is a homogeneous degree $d$
multilinear polynomial such that every nonzero monomial of $f$ has
exactly one variable from $X_i$, for all $1\le i\le d$.
Some of the well-known polynomial families like the permanent $PERM_n$ and the determinant $DET_n$, are set-multilinear. The variable set is
$X=\{x_{ij}\}_{1\le i,j\le n}$ and the partition can be taken as the
row-wise partition of the variable set. I.e.\ $X_i=\{x_{ij}\mid 1\le
j\le n\}$ for $1\le i\le n$. In this work, we study the set-multilinear circuit complexity of the determinant polynomial
$DET_n$. A \emph{set-multilinear arithmetic circuit} $C$ computing $f$
w.r.t.\ the above partition of $X$, is a directed acyclic graph such
that each in-degree $0$ node of the graph is labeled with an element
from $X\cup \F$. Each internal node $v$ of $C$ is of in-degree $2$, and
is either a $+$ gate or $\times$ gate. With each gate $v$ we can
associate a subset of indices $I_v\subseteq [d]$ and the polynomial
$f_v$ computed by the circuit at $v$ is set-multilinear over the
variable partition $\bigsqcup_{i\in I_v} X_i$. If $v$ is a $+$ gate then
for each input $u$ of $v$,  $I_u=I_v$. If $v$ is a $\times$ gate with
inputs $v_1$ and $v_2$ then $I_v= I_{v_1}\sqcup I_{v_2}$. Clearly, in
a set-multilinear circuit every gate computes a set-multilinear
polynomial (in a syntactic sense). The output gate of $C$ computes the polynomial $f$, which is set-multilinear over the variable partition $\bigsqcup_{i\in [d]} X_i$. The \emph{size} of $C$ is
the number of gates in it and its \emph{depth} is the length of the
longest path from an input gate to the output gate of
$C$. Additionally, a set-multilinear circuit $C$ is called a \emph{set-multilinear formula} if out-degree of every gate is bounded by $1$.

Set-multilinear arithmetic circuits are a natural model for computing set-multilinear polynomials. It can be seen that each set-multilinear polynomial can be computed by a set-multilinear arithmetic circuit.
For set-multilinear formulas, super-polynomial size lower bounds are known \cite{Raz09}. Super-polynomial lower bounds  for a class of set-multilinear ABPs computing the determinant $DET_n$ is shown in \cite{AR16}. 
It is known that proving super-polynomial lower bound result for general set-multilinear circuits computing the permanent polynomial $PERM_n$  would imply that $PERM_n$ requires super-polynomial size non-commutative arithmetic circuits, and this is an open problem for over three decades. 
Non-commutative circuits are a restriction on the computational power of circuits. Though non-commutative circuits compute non-commutative polynomials, one can study what is the power of commutativity in computing the $DET_n$ polynomial.  
Noncommutative arithmetic circuit models are well studied, see e.g., \cite{N91,AS10,HWY10}. In \cite{AS10}, it was shown that computing the non-commutative determinant polynomial is as hard as computing the commutative permanent polynomial.

\subsection{Our Results} 
To explain our results, we first define the computational model that we study. Let $S_n$ denote the set of all permutations over the set $\{1,2...,n\}$.

\begin{definition}[Regular Set-Multilinear Circuits]
	Let $X=X_1\sqcup X_2\sqcup\dots\sqcup X_d$ be a partition of the variable set $X$. Let $\sigma \in S_d$. 
	A set-multilinear circuit $C$ that computes a set-multilinear polynomial $f\in F[X]$ w.r.t the above partition  is called \emph{regular set-multilinear circuit w.r.t $\sigma \in S_d$}, if every gate $v$ in $C$ is associated with an interval $I_v$ w.r.t  $\sigma \in S_d$. In other words, $\sigma \in S_d$ defines an ordering $(\sigma(1),\sigma(2),\cdots,\sigma(d))$ and every gate $v$ in $C$ is associated with an interval $I_v$ w.r.t $\sigma$-ordering $(\sigma(1),\sigma(2),\cdots,\sigma(d)).$ 
\end{definition}

Let $C$ be a regular set-multilinear circuit w.r.t $\sigma$ computing a commutative polynomial $f$ of degree $d$. 
Let $v$ be a gate in $C$ computing the polynomial $f_v$ of degree $k$. By definition, $f_v$ is a  set-multilinear polynomial w.r.t $I_v=[\sigma(i),\sigma(i+1),\cdots,\sigma(i+k)]$, where $i<=d-k$. Let $order(f_v)=I_v=(\sigma(i),\sigma(i+1),\cdots,\sigma(i+k))$.

Since for each gate $v$ in $C$, $I_v$ can be viewed as an interval w.r.t $\sigma \in S_d$, the two children $u$ and $w$ of $v$ can be designated as left and right child. In particular, for each product gate $v$ with children $u$ and $w$ such that $I_v=I_u \sqcup I_w$, we refer to $u$ as the left child of $v$, and $w$ as the right child of $v$. 

We make the following observations about regular set-multilinear circuits:
\begin{itemize}
	\item If $v$ is an input gate (leaf node) labeled by a field constant, then $order(f_v)=()$, where $()$ is the empty sequence. If $v$ is an input gate labeled by a variable $x_{i,j}$, then $order(f_v)=(i)$.
	
	\item If $v$ is an product gate, then $order(f_v)=order(f_u)\sqcup order(f_w)$, where the interval $order(f_v)$ is obtained by appending $order(f_u)$ with $order(f_w)$.
	
	\item If $v$ is a sum gate, then $order(f_v)=order(f_u)=order(f_w)$.  
\end{itemize}

One can define several versions of non-commutative $DET_n$ polynomial.
Non-commutative circuits computing the $DET_n$ polynomial, where the first index of the variables in each monomial is in increasing order, can be seen as regular set-multilinear w.r.t the identity permutation. 
In \cite{AS10}, it was shown that computing the non-commutative determinant polynomial is as hard as computing the commutative permanent polynomial. A natural next step is to find the set-multilinear circuit complexity of the commutative determinant polynomial.

We study the computational complexity of the commutative determinant polynomial $DET_n$ computed by a sum of regular set-multilinear circuits. We show that if the determinant polynomial $DET_n$ is computed by a circuit $C$ of size $s$, where $C$ is a sum of constantly-many regular set-multilinear circuits, then we can modify $C$  to compute the permanent polynomial $PERM_{n^\epsilon}$,where $\epsilon>0$, such that the new circuit size is polynomially related to the size of $C$. We remark that in our result, there is no restriction on the number of different parse tree types/shapes (see e.g., \cite{AR16}) allowed in each regular circuits.

One can view this as a generalization of the result shown in \cite{AS10} to a class of set-multilinear circuits computing the determinant polynomial $DET_n$. We obtain our result by carefully combining \emph{Erd{\"o}s-Szekeres theorem} \cite{ES35} and  some properties that we prove about regular set-multilinear circuits and the result of \cite{AS10}.

\section{Preliminaries}

\subsection{Determinant and Permanent}
\begin{definition}{(Commutative Determinant and Permanent)}
	Given the set of variables $X=\{x_{i,j} \mid 1 \leq i,j  \leq n\}$, the $n \times n$ commutative determinant and the $n \times n$ commutative permanent over $X$, denoted by $DET_n(X)$ and $PERM_n(X)$ respectively, are $n^2$-variate polynomials of degree $n$ given by:

	$$DET_n(X)=\sum_{\sigma \in S_n}sgn(\sigma)\prod_{i=1}^{n}x_{i,\sigma(i)} $$
	$$PERM_n(X)=\sum_{\sigma \in S_n}\prod_{i=1}^{n}x_{i,\sigma(i)}, $$
	
\end{definition}

Non-commutative determinant can be defined in various ways depending on the order in which variables are multiplied. One natural type of non-commutative determinant, called the \emph{Cayley determinant} $CDET_n$, is one where the order of multiplication is the identity permutation w.r.t first index of the variable.

\subsection{Erd{\"o}s-Szekeres Theorem}
\begin{theorem}[Erd{\"o}s-Szekeres Theorem, \cite{ES35}]
	Let $n$ be a positive integer. Let S be a sequence of distinct integers of length at least $n^2 + 1$. Then, there exists a monotonically increasing subsequence of S of length $n + 1$, or a monotonically decreasing subsequence of S of length $n + 1$.
\end{theorem}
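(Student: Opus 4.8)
The plan is to use the classical pigeonhole argument (due to Seidenberg). Write $S = (s_1, s_2, \ldots, s_N)$ with $N \geq n^2 + 1$. For each index $i \in \{1,\ldots,N\}$ I would define two quantities attached to the position $i$: let $a_i$ be the length of the longest monotonically increasing subsequence of $S$ that \emph{ends} at $s_i$, and let $b_i$ be the length of the longest monotonically decreasing subsequence of $S$ that \emph{ends} at $s_i$. Both are at least $1$. The goal is to show that if there is no increasing subsequence of length $n+1$ and no decreasing subsequence of length $n+1$, then every pair $(a_i,b_i)$ lies in $\{1,\ldots,n\}^2$, a set of size $n^2$, while there are $n^2+1$ indices; a collision is then forced, and I will argue this is impossible.

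The key step is to prove that the map $i \mapsto (a_i, b_i)$ is injective. Suppose $i < j$. Since the integers in $S$ are distinct, either $s_i < s_j$ or $s_i > s_j$. In the first case, appending $s_j$ to a longest increasing subsequence ending at $s_i$ produces an increasing subsequence ending at $s_j$ of length $a_i + 1$, so $a_j \geq a_i + 1 > a_i$, hence $(a_i,b_i) \neq (a_j,b_j)$. In the second case, the symmetric argument with decreasing subsequences gives $b_j \geq b_i + 1 > b_i$, again distinguishing the two pairs. This establishes injectivity for every pair of distinct indices.

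Finally I would combine the two observations. Assume for contradiction that $a_i \leq n$ and $b_i \leq n$ for all $i$. Then the $N \geq n^2 + 1$ pairs $(a_i,b_i)$ all lie in $\{1,\ldots,n\}^2$, which has only $n^2$ elements, contradicting injectivity. Hence some $a_i \geq n+1$, which exhibits a monotonically increasing subsequence of length $n+1$, or some $b_i \geq n+1$, which exhibits a monotonically decreasing subsequence of length $n+1$. Either way the conclusion holds.

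I do not expect a genuine obstacle here, since the statement is classical; the only point requiring a little care is the injectivity claim, where one must treat both orderings $s_i < s_j$ and $s_i > s_j$ and invoke the hypothesis that the integers are \emph{distinct} precisely to exclude $s_i = s_j$ (which is where the distinctness assumption, as opposed to merely allowing ties with a slightly more careful definition, is used).
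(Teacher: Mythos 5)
Your proof is correct: the Seidenberg-style argument (attaching to each position $i$ the pair $(a_i,b_i)$ of longest increasing/decreasing subsequence lengths ending there, proving injectivity of $i \mapsto (a_i,b_i)$ via distinctness, and applying pigeonhole to the $n^2$ possible pairs against $n^2+1$ indices) is complete and handles both cases of the injectivity step properly. Note that the paper does not prove this statement at all --- it is quoted as a known classical result from \cite{ES35} and used as a black box --- so there is no internal proof to compare against; your argument is the standard and correct one.
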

Let $A,B$ be two $n \times n$ matrices. The following are known facts about the determinant and permutations.\\
{\bf Fact 1:} $det(A\times B)=det(A)\times det(B)$. \\
{\bf Fact 2:} The determinant of a permutation matrix is either +1 or -1.\\
{\bf Fact 3:} Let $\tau,\sigma \in S_n$. Then $sign(\tau \circ \sigma)=sign(\tau)\times sign(\sigma)$.\\

For $n \in \mathbb{N},$ let $[n]=\{1,2,\cdots,n\}$.

\section{Hardness of the Determinant: Sum of Two Regular Set-Multilinear Circuits}
In this section, we show that if the determinant polynomial is computed by a sum of two regular set-multilinear circuits then the permanent polynomial can also be represented as a regular set-multilinear circuit. This result involves all the techniques which will be used in the main result and it is easy to explain in this sum of two regular circuits model.
In the next section, we will prove the result for sum of constantly many regular set-multilinear circuits.
We note that all our polynomials are commutative. For the purpose of readability, we sometimes ignore the floor operation.

Let  $X=\{x_{i,j} \mid 1 \leq i,j  \leq n\}$ be the set of variables. Let $X_i=\{x_{ij}\mid 1\le
j\le n\}$ for $1\le i\le n$. Our aim is to show that if $C=C^{\sigma1}_1 + C^{\sigma2}_2$ computing the determinant polynomial $DET_n(X) \in \mathbb{F}[X]$, where the circuits $C^{\sigma1}_1$, $C^{\sigma2}_2$ are \emph{regular set-multilinear circuits} w.r.t $\sigma1,\sigma2 \in S_n$ respectively, then  there is an efficient transformation that converts the given circuit $C$ to another circuit $C'$ computing the permanent polynomial of degree $\sqrt{n}/2$.
Given $C=C^{\sigma1}_1 + C^{\sigma2}_2$ computing $DET_n(X)$, if $\sigma1=\sigma2$ then we can directly adapt the result of  \cite{AS10} and get a circuit $C'$ computing the permanent polynomial of degree $n/2$. If $n$ is not even then  we can substitute variables in the set $X_n$ suitably from $\{0,1\}$ such that $C$ computes $DET_{n-1}(X)$ before using the result of \cite{AS10}.

The case of $\sigma1 \neq \sigma2$ needs more work  that we explain now. The idea is to use the well known Erd{\"o}s-Szekeres Theorem \cite{ES35} that guarantees that any sequence of $n$ distinct integers contains a subsequence of length at least $\sqrt{n}$  that is either monotonically increasing or  decreasing. By viewing $\sigma=(\sigma(1),\sigma(2),\cdots,\sigma(n))$ as a sequence of integers, we apply the above result to permutations $\sigma1,\sigma2 \in S_n$. We first apply it to $\sigma1=(\sigma1(1),\sigma1(2),...,\sigma1(n))$ and let
$A=\{i_1,i_2,\cdots,i_{\sqrt{n}}\}$ be the set of indices that appear in this monotone subsequence. 
If the subsequence   is monotonically increasing then we do substitutions in  $DET_n(X)$ so that it computes the determinant polynomial of $\sqrt{n}\times\sqrt{n}$ matrix whose rows and columns are labeled by the elements of set $A$. This is done by making suitable substitutions to the variables in $X$ from $X \cup \{0,1\}$ in the given circuit $C$. After this we get a circuit $C'$ from $C$ that computes $DET_{\sqrt{n}}(X')$ where $X'=\bigsqcup_{i \in A}X_i$.

We note that $C'=C^{\sigma1'}_1 + C^{\sigma2'}_2$ where $\sigma1',\sigma2' \in S_{\sqrt{n}}$ and $\sigma1'=(\sigma1'(1),\sigma1'(2),\cdots,\sigma1'(\sqrt{n}))$ is in increasing order. 
If $\sigma1'=\sigma2'$, then we can use \cite{AS10} and get the permanent of degree $\sqrt{n}/2$. Otherwise,
we apply Erd{\"o}s-Szekeres Theorem to permutations $\sigma1',\sigma2'$. In particular, this will give us a monotone subsequence in $\sigma2'=(\sigma2'(1),\sigma2'(2),\cdots,\sigma2'(\sqrt{n}))$ with length at least $n^{1/4}$. If this sequence is  increasing, then the same subsequence is also  increasing in $\sigma1'$ as we already noted that it is in increasing order. Let $A_1=\{j_1,j_2,\cdots,j_{n^{1/4}}\}$ be the set of indices that appear in this monotone subsequence. Now we project, as before  so that it computes the determinant polynomial of a $n^{1/4}\times n^{1/4}$ matrix whose rows and columns are labeled by the elements in set $A_1$. After substituting from 
$X'\cup \{0,1\}$ for each variable in the given circuit  $C'$, we get a regular circuit $C^{''}$ that computes $DET_{n^{1/4}}(X'')$, where  $X''=\bigsqcup_{i \in A_1}X_i$. 

The important thing to note here is that in the new circuit $C^{''}=C^{\sigma1''}_1 + C^{\sigma2''}_2$, where $\sigma1'',\sigma2'' \in S_{n^{1/4}}$, both $\sigma1''$ and $\sigma2''$ are the same, i.e., $\sigma1''=\sigma2''$. 
We can rename the variable sets in $X''=\bigsqcup_{i \in A_1}X_i$ to $X_1,X_2,\cdots,X_{n^{1/4}}$. For example, if $i_1 \in A_1$ is the lowest index then we can rename $X_{i_1}$ to $X_1$, and for all $j$, rename $X_{i_1,j}$ to $X_{1,j}$. Similarly, the $k$-th lowest index is modified. After these modifications, we can assume that $\hat{X}=\bigsqcup_{i \in [n^{1/4}]}X_i$. 

As we noted before, any non-commutative circuit computing $DET_n$, where the first index of the variables in each monomial is in increasing order, can be seen as \emph{regular set-multilinear w.r.t identity permutation}.
Now we can apply the following theorem (Theorem 10 from \cite{AS10}) to get our result.
\begin{theorem}(Theorem 10, \cite{AS10})
	For any $n \in \mathbb{N}$, if there is a non-commutative circuit $C$ of size $s$ computing the Cayley determinant $DET_{2n}(X)$ then there is a circuit $C'$ of size polynomial in $s$ and $n$ that computes the Cayley permanent $PERM_{n}(Y)$.
\end{theorem}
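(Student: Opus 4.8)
The plan is to exhibit $PERM_n(Y)$ as a \emph{substitution instance} of the Cayley determinant $DET_{2n}(X)$, where each $x_{i,j}$ is replaced not by a scalar or a single variable but by a matrix over a matrix algebra whose dimension is polynomial in $n$. Substituting scalars could never help — evaluating the (commutative) determinant at field elements or variables is as cheap as the determinant itself — but a non-commutative circuit for $DET_{2n}$ computes it as a formal element of the free non-commutative polynomial ring, so we may legally evaluate the circuit at elements of any associative $\mathbb{F}$-algebra and it still outputs ``the determinant of that matrix''. The observation that keeps the algebra small is combinatorial: group the $2n$ rows and columns into $n$ consecutive pairs $\{2k-1,2k\}$, and call $\sigma\in S_{2n}$ a \emph{straight block permutation} if there is $\pi\in S_n$ with $\sigma(2k-1)=2\pi(k)-1$ and $\sigma(2k)=2\pi(k)$ for all $k$. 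Every such $\sigma$ has sign $+1$, since it is a product of $n$ disjoint pairs of transpositions. Hence, if the substitution can be engineered so that \emph{only} straight block permutations survive and each one evaluates to the single monomial $\prod_k y_{k,\pi(k)}$, then $DET_{2n}$ collapses directly to $\sum_{\pi\in S_n}\prod_k y_{k,\pi(k)}=PERM_n(Y)$, with no sign-cancellation gadgetry needed at all.

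Concretely, I would use the transition algebra on the state set $\{\star,t_1,\dots,t_n\}$, namely $M_{n+1}(\mathbb{F}[Y])$ with matrix units $E_{a,b}$ satisfying $E_{a,b}E_{c,d}=\delta_{b,c}E_{a,d}$, and substitute for $x_{i,j}$ the matrix $B_{i,j}$ defined (for $k,l\in[n]$) by: $B_{2k-1,\,2l-1}=y_{k,l}\,E_{\star,t_l}$ and $B_{2k-1,\,2l}=0$ on odd rows, and $B_{2k,\,2l}=E_{t_l,\star}$ and $B_{2k,\,2l-1}=0$ on even rows. In the product $B_{1,\sigma(1)}B_{2,\sigma(2)}\cdots B_{2n,\sigma(2n)}$ (taken in the Cayley order of increasing row index), the matrix-unit bookkeeping forces $\sigma(2k-1)$ odd and $\sigma(2k)$ even, and $E_{\star,t_l}E_{t_{l'},\star}=\delta_{l,l'}E_{\star,\star}$ forces the column hit in row $2k$ to match the one hit in row $2k-1$; so only straight block permutations contribute, each evaluating to $\big(\prod_k y_{k,\pi(k)}\big)E_{\star,\star}$. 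Summing over $S_{2n}$ with signs — all $+1$ on the surviving terms — gives the identity $DET_{2n}(B)=PERM_n(Y)\,E_{\star,\star}$; the cases $n=1,2$ serve as quick sanity checks.

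To finish, I would build $C'$ by simulating the given circuit $C$ gate by gate over $M_{n+1}(\mathbb{F}[Y])$: represent each wire by the $(n+1)^2$ subcircuits computing the entries of the matrix it carries, turn each $+$-gate into $(n+1)^2$ scalar additions, turn each $\times$-gate into an $(n+1)\times(n+1)$ matrix product ($O(n^3)$ scalar operations), and replace a leaf $x_{i,j}$ by the constant-or-variable entries of $B_{i,j}$. Because $C$ computes $DET_{2n}$ as a non-commutative polynomial, its output wire now carries $DET_{2n}(B)=PERM_n(Y)E_{\star,\star}$, and reading off the $(\star,\star)$-entry gives a circuit of size $O(s\,n^3)=\mathrm{poly}(s,n)$ that computes $PERM_n(Y)$. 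The one genuinely creative step is the first — designing a substitution that at once restricts the surviving permutations to a sign-free family, evaluates each to the correct monomial, and does so within polynomial dimension; after that, checking the determinant identity and carrying out the circuit simulation are routine.
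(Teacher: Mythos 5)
Your proposal is correct and is essentially the argument of Arvind--Srinivasan's Theorem 10, which this paper imports without reproving: substitute matrix units from a $\mathrm{poly}(n)$-dimensional algebra so that only the ``doubled'' permutations (all of sign $+1$) survive and each contributes the monomial $y_{1,\pi(1)}\cdots y_{n,\pi(n)}$, then simulate the non-commutative circuit entrywise over that algebra and read off the $(\star,\star)$ entry. Two minor touch-ups: the sign of a doubled permutation is $+1$ because it acts as the same $\pi$ on the odd and the even positions, so its sign is $\mathrm{sgn}(\pi)^2$ (it is not literally ``a product of $n$ disjoint pairs of transpositions''), and to obtain the \emph{Cayley} permanent as stated you should carry the matrix entries in $\mathbb{F}\langle Y\rangle$ rather than $\mathbb{F}[Y]$, noting that the $y$-factors automatically appear in increasing row order.
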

 
If $n'=\lfloor n^{1/4} \rfloor $ is not an even number then we ignore the $X_{n'}$ variable set in $\hat{X}$ by following   substitutions: $X_{n',n'}=1$ and for all $j \in [n'-1]$, $X_{n',j}=0$ and $X_{j,n'}=0$. 
After this substitutions, we have a circuit that computes the determinant $DET_{n'-1}$ polynomial. 
Now applying the above theorem we get a circuit $\widehat{C}$ that computes the permanent polynomial of degree $\frac{n'-1}{2}$.

We now explain how to handle if Erd{\"o}s-Szekeres Theorem guarantees only monotonically decreasing sequence.  For that we define the \emph{reverse} of a regular set-multilinear circuit $C$ w.r.t $\sigma \in S_n$ computing a polynomial $f$. This results in a regular set-multilinear circuit $C^{rev}$ w.r.t $\sigma^{rev} \in S_n$, where $\sigma^{rev}=(\sigma(n),\sigma(n-1),...,\sigma(1))$, \emph{computing the same commutative polynomial $f$} as circuit $C$. We note that  if $\sigma$ has monotonically decreasing subsequence of length $k$ then $\sigma^{rev}$ has a monotonically increasing subsequence of \emph{same} length $k$. We obtain $C^{rev}$ by interchanging the left and right children of product gates in $C$. This is proved in the following lemma.

\begin{lemma}[Reversal Lemma]
\label{lem:reverse}
      Let $X=\{x_{i,j} \mid 1 \leq i,j  \leq n\}$ be a set of variables and $X=X_1 \sqcup X_2 \sqcup ... \sqcup X_n$ be a partition of $X$, where for all $1 \leq i \leq n$, $X_i=\{x_{i,1},x_{i,2},...,x_{i,n}\}$. Let $C$ be a regular set-multilinear  circuit w.r.t a permutation $\sigma \in S_n$ computing the polynomial $f\in F[X]$. 
      Then, there exists a regular set-multilinear circuit $C^{rev}$ w.r.t $\sigma^{rev} \in S_n$ where $\sigma^{rev}=(\sigma(n),\sigma(n-1),...,\sigma(1))$ computing the same commutative polynomial $f$ as circuit $C$. Moreover, the size of $C^{rev}$ is same as that of $C$.
\end{lemma}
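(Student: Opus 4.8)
The plan is to prove this by structural induction on the gates of $C$, processed in topological order, while maintaining two invariants simultaneously. Formally, I would let $C^{rev}$ have exactly the same underlying directed acyclic graph as $C$, the same gate types ($+$ or $\times$), and the same leaf labels; the only change is that for every product gate $v$ with left child $u$ and right child $w$ (so $order(f_v)=order(f_u)\sqcup order(f_w)$ in $C$), we declare $w$ to be the left child and $u$ the right child in $C^{rev}$. Writing $v'$ for the gate of $C^{rev}$ corresponding to a gate $v$ of $C$, and $g_{v'}$ for the polynomial computed at $v'$, the invariants I would carry are: (i) $g_{v'}=f_v$ as commutative polynomials in $\F[X]$; and (ii) $order(g_{v'})$ is the reversal of the sequence $order(f_v)$.

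First I would check that invariant (ii) is well-posed, i.e.\ that reversing an interval w.r.t.\ $\sigma$ yields an interval w.r.t.\ $\sigma^{rev}$. If $order(f_v)=(\sigma(i),\sigma(i+1),\dots,\sigma(i+k))$, then since $\sigma^{rev}(j)=\sigma(n+1-j)$, the reversed sequence $(\sigma(i+k),\dots,\sigma(i))$ is precisely the contiguous block of $\sigma^{rev}$ at positions $n+1-(i+k),\dots,n+1-i$, hence an interval w.r.t.\ $\sigma^{rev}$. The same computation justifies the remark preceding the lemma: a monotonically decreasing subsequence of $\sigma$, read inside $\sigma^{rev}$, becomes a monotonically increasing subsequence of the same length.

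For the induction, the base case is the leaves, where there are no children to swap: a constant leaf has $order=()$ and a variable leaf $x_{i,j}$ has $order=(i)$, both fixed by reversal, and the computed polynomial is unchanged, so (i) and (ii) hold. For a sum gate $v$ with children $u,w$, regularity of $C$ gives $order(f_u)=order(f_w)=order(f_v)$; by the inductive hypothesis $order(g_{u'})=order(g_{w'})$ are the common reversal, so $v'$ is a legal sum gate of a circuit that is regular w.r.t.\ $\sigma^{rev}$, and $g_{v'}=g_{u'}+g_{w'}=f_u+f_w=f_v$. For a product gate $v$ with left child $u$ and right child $w$, in $C^{rev}$ the children are interchanged, so $order(g_{v'})=order(g_{w'})\sqcup order(g_{u'})$, which by the inductive hypothesis equals $order(f_w)^{rev}\sqcup order(f_u)^{rev}=(order(f_u)\sqcup order(f_w))^{rev}=order(f_v)^{rev}$, an interval w.r.t.\ $\sigma^{rev}$ by the previous paragraph; and $g_{v'}=g_{w'}\cdot g_{u'}=f_w\cdot f_u=f_u\cdot f_w=f_v$, where the middle equality is exactly where commutativity of $\F[X]$ is invoked. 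Applying the invariants to the output gate shows $C^{rev}$ is regular w.r.t.\ $\sigma^{rev}$ and computes $f$; since the DAG is untouched, $|C^{rev}|=|C|$.

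The argument is essentially bookkeeping, so I do not expect a serious obstacle. The two points that need care are that the invariants must be established jointly — one cannot certify that $C^{rev}$ is a legal regular circuit without tracking the $order$ sequences through the recursion — and that the polynomial identity at product gates genuinely uses commutativity of $\F[X]$, which is precisely why the conclusion survives for the commutative polynomial $f$ but would fail for the non-commutative polynomial that $C$ computes when read as a non-commutative circuit.
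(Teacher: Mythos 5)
Your proposal is correct and follows essentially the same route as the paper: construct $C^{rev}$ by swapping the children of every product gate, then induct over the gates maintaining the joint invariant that each gate computes the same commutative polynomial while its $order$ sequence is reversed (the paper phrases the induction on depth rather than topological order, and you additionally spell out the small well-posedness check that a reversed interval w.r.t.\ $\sigma$ is an interval w.r.t.\ $\sigma^{rev}$, which the paper leaves implicit). No gaps.
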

    
    \begin{proof}
        First, we describe the construction of the circuit $C^{rev}$, and then prove its correctness. Let $v$ be a gate in $C$. As $C$ is a regular set-multilinear circuit w.r.t $\sigma \in S_n$, we have an interval $I_v$ w.r.t  the permutation $\sigma$ associated with the gate $v$. \\
        \textbf{Construction of $C^{rev}$:} Starting with the product gates at the bottom of $C$ and gradually moving up level-by-level, swap the left and right children of each product gate. \\
        \textbf{Correctness: } We show by induction on depth $d$ of $C$ that both circuits $C$ and $C^{rev}$ compute the same polynomial $f\in F[X]$ and $C^{rev}$ is a regular set-multilinear circuit w.r.t $\sigma^{rev} \in S_n$, where $\sigma^{rev}=(\sigma(n),\sigma(n-1),...,\sigma(1))$.
        Let $f_v$ and $f^{rev}_v$ denote the polynomials computed at any node $v$ in $C$ and $C^{rev}$, respectively. 
        Let $order(f_v)=I_v$. We will show that $f_v$ and $f^{rev}_v$ are the same polynomial and the only difference is in their orders. That is, $order(f^{rev}_v) = rev(order(f_v))$, where $rev(order(f_v))$ is $order(f_v)$ written in reverse (i.e., the interval $I_v$ is reversed).
        
        The proof is by induction  on the depth $d$ of the  circuit $C^{rev}$. Let $f^{rev}$ denote the polynomial computed by $C^{rev}$.
        
        \textbf{Base Case:} The base case is any node at depth 0, i.e., a leaf node. Consider any leaf node $l$. Then $f_l$, the polynomial computed at $l$, is either a variable or a field constant in $F$. If $f_l$ is a field constant, then $order(f_l) = ()$. Therefore, $order(f^{rev}_l)=()$. If $f_l$ is a variable $x_{i,j},1 \leq i,j \leq n$, then $order(f_l)=(i)$. Therefore, the $order(f^{rev}_l) = (i)$. In both cases, $f^{rev}_l=f_l$ and $order(f^{rev}_l)=rev(order(f_l))$.
        
        \textbf{Induction Hypothesis:} Assume for any node $u$ at depth $d'$, $1 \leq d' \leq d-1$, that $f^{rev}_u=f_u$ and $order(f^{rev}_u) = rev(order(f_u))$.
        
        \textbf{Induction Step:} Consider any node $v$ at depth $d'+1$, with $v_L$ and $v_R$  as its left and right children, respectively. By induction hypothesis, $f^{rev}_{v_L}=f_{v_L}$ and $order(f^{rev}_{v_L}) = rev(order(f_{v_L}))$. Similarly, $f^{rev}_{v_R}=f_{v_R}$ and $order(f^{rev}_{v_R}) = rev(order(f_{v_R}))$. 
        
        If $v$ is a product gate, then $f^{rev}_v=f^{rev}_{v_R} \times f^{rev}_{v_L}$, which is equivalent to  $f_{v_R} \times f_{v_L} = f_v$ by induction hypothesis. By induction hypothesis, $order(f^{rev}_v)$ is $order(f^{rev}_{v_R})$ appended with $order(f^{rev}_{v_L})$. The $order(f^{rev}_{v_L}) = rev(order(f_{v_L}))$, and $order(f^{rev}_{v_R}) = rev(order(f_{v_R}))$. Therefore,  $order(f^{rev}_v) = rev(order(f_v))$.

        If $v$ is a sum gate, then $f^{rev}_v=f^{rev}_{v_L} + f^{rev}_{v_R}$, which is equivalent to  $f_{v_L} + f_{v_R} = f_v$  by induction hypothesis. As $v$ is a sum gate, $order(f_v) = order(f_{v_L}) =order(f_{v_R})$.  As $order(f^{rev}_{v_L}) = rev(order(f_{v_L}))$ by induction hypothesis, we have that $order(f^{rev}_v) = rev(order(f_v))$ and $order(f^{rev}_{v_R}) = rev(order(f_{v_R}))$. Thus, 
        $order(f^{rev}_v) = order(f^{rev}_{v_L}) = order(f^{rev}_{v_R})$.  
        
        The size of $C^{rev}$ is same as that of $C$ because the only modification we are doing to $C$ is swapping the children of product gates. This completes proof of the lemma.      
    \end{proof}
Using Lemma \ref{lem:reverse}, we can handle the monotonically decreasing sequence without modifying the polynomial computed by a regular set-multilinear circuit. This gives us a circuit $\widehat{C}$ that computes the permanent polynomial of degree $\frac{\sqrt[4]{n}}{2}$. We remark that Lemma \ref{lem:reverse} can be adapted for non-commutative circuits as well. 

We now explain how to get the permanent polynomial of degree $\frac{\sqrt{n}}{2}$ instead of $\frac{\sqrt[4]{n}}{2}$. This gives us quadratic improvement in the degree of the permanent polynomial.  This is based on the observation that if $C$ is a regular set-multilinear  circuit w.r.t a permutation $\sigma \in S_n$ computing the determinant polynomial $DET_n(X)$, then for any permutation $\tau \in S_n$, there is another regular set-multilinear circuit $C'$  w.r.t $\tau \circ \sigma \in S_n$ computing the same determinant polynomial $DET_n(X)$. Moreover, the size of $C'$ is at most one more than the size of $C$. 

In other words, composition of permutations can be efficiently carried out for regular set-multilinear  circuits computing the determinant polynomial $DET_n(X)$. 
\begin{lemma}[Composition Lemma]
\label{lem:composition}
Let $C=C_1 + C_2$ be the sum of two regular set-multilinear circuits computing the determinant polynomial $DET_n(X)$, where the circuits $C_1$, $C_2$ are \emph{regular set-multilinear circuits} w.r.t $\sigma_1,\sigma_2 \in S_n$ respectively. Then for any permutation $\tau \in S_n$, there exists another circuit $C'$ that computes $DET_n(X)$. $C'$ is also a sum of two regular set-multilinear circuits (regular set-multilinear w.r.t $\tau \circ \sigma1, \tau \circ \sigma2  \in S_n$).  Moreover, the size of $C'$ is at most one more than the size of $C$.
    \end{lemma}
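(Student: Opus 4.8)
The plan is to obtain $C'$ from $C$ by a purely syntactic relabeling of the leaves together with a single scalar correction. In every leaf of $C = C_1 + C_2$ that carries a variable $x_{i,j}$ I would substitute $x_{\tau(i),j}$ (leaves carrying field constants are left untouched), producing circuits $\widetilde{C}_1,\widetilde{C}_2$ and $\widetilde{C} = \widetilde{C}_1 + \widetilde{C}_2$. Since this touches neither gates nor edges, $\mathrm{size}(\widetilde C) = \mathrm{size}(C)$, and because $i\mapsto \tau(i)$ is a bijection, $x_{i,j}\mapsto x_{\tau(i),j}$ is an honest relabeling of the variable set.

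The first claim to establish is that $\widetilde{C}_1$ is a regular set-multilinear circuit w.r.t.\ $\tau\circ\sigma_1$ and $\widetilde{C}_2$ is regular w.r.t.\ $\tau\circ\sigma_2$. This I would prove by induction on circuit depth, in exact analogy with the proof of Lemma~\ref{lem:reverse}, by tracking how $order(\cdot)$ changes under the substitution: a variable leaf with $order(f_v) = (i)$ becomes a leaf with $order = (\tau(i))$; if $v$ is a product gate whose associated interval w.r.t.\ $\sigma_1$ is $(\sigma_1(p),\sigma_1(p+1),\dots,\sigma_1(q))$, then after relabeling its associated sequence is $(\tau(\sigma_1(p)),\tau(\sigma_1(p+1)),\dots,\tau(\sigma_1(q)))$, which is precisely an interval w.r.t.\ the $\tau\circ\sigma_1$-ordering $((\tau\circ\sigma_1)(1),\dots,(\tau\circ\sigma_1)(n))$; and a sum gate preserves the common order of its two children. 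Thus the interval condition defining regularity w.r.t.\ $\tau\circ\sigma_1$ holds at every gate of $\widetilde{C}_1$. This step is purely structural and does not depend on which polynomial is computed.

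It remains to correct the polynomial. Since $C$ computes $DET_n(X)$, the circuit $\widetilde{C}$ computes $DET_n$ with $x_{i,j}$ replaced by $x_{\tau(i),j}$; writing $M = (x_{i,j})$, this is $\det(P_\tau M)$ where $P_\tau$ is the permutation matrix of $\tau$, and by Facts 1 and 2 it equals $sgn(\tau)\cdot DET_n(X)$ (equivalently, reindex $\sum_{\pi\in S_n} sgn(\pi)\prod_i x_{\tau(i),\pi(i)}$ via $\rho=\pi\circ\tau^{-1}$ and use Fact 3). As $sgn(\tau)\in\{+1,-1\}$ is its own inverse, multiplying the output of $\widetilde{C}$ by the field constant $sgn(\tau)$ yields a circuit $C'$ computing $DET_n(X)$; this correction costs at most one gate and can be arranged (by placing the scalar on the output edge, or distributing it over the two summands) so that $C'$ is again presented as a sum of two regular set-multilinear circuits, one w.r.t.\ $\tau\circ\sigma_1$ and one w.r.t.\ $\tau\circ\sigma_2$, with $\mathrm{size}(C')\le \mathrm{size}(C)+1$.

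The one place that needs care is the direction of the relabeling: one must substitute $x_{i,j}\mapsto x_{\tau(i),j}$ and not $x_{i,j}\mapsto x_{\tau^{-1}(i),j}$, since only the former carries an interval of $\sigma_1$ onto an interval of $\tau\circ\sigma_1$ rather than of $\tau^{-1}\circ\sigma_1$. The second point to keep straight is the sign bookkeeping: the whole argument hinges on the fact that permuting the rows of $M$ merely rescales $DET_n$ by $sgn(\tau)$, which is exactly why the lemma is restricted to the determinant — for an arbitrary set-multilinear polynomial a row relabeling produces a genuinely different polynomial and the method fails. The inductive verification of regularity in the second paragraph is routine given the template already set up in the proof of Lemma~\ref{lem:reverse}, so I do not expect it to be the bottleneck.
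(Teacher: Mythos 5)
Your proposal is correct and follows essentially the same route as the paper's proof: relabel each leaf $x_{i,j}$ as $x_{\tau(i),j}$, verify by induction on depth that each summand becomes regular w.r.t.\ $\tau\circ\sigma_1$ and $\tau\circ\sigma_2$, and correct the resulting factor $sgn(\tau)$ (via $\det(P_\tau A_X)=sgn(\tau)\det(A_X)$) with a single constant gate. Your added remarks on the direction of the substitution and on distributing the sign over the two summands are consistent with, and slightly more careful than, the paper's presentation.
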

    
    \begin{proof}
     First, we describe the construction of the circuit $C'=C'_1+C'_2$ and then prove its correctness. \\
        \textbf{Construction of $C^{'}$:}  For every variable $x_{i,j}$ in $C=C_1 + C_2$, substitute the variable $x_{\tau(i),j}$. 
        Let $\widehat{C}$ be this modified circuit. 
        If $sgn(\tau)$ is -1, then add a leaf node labeled -1 and multiply the root node of $\widehat{C}$ with this leaf node.  Let $C'$ be this modified circuit. The size of $C'$ is at most one more than the size of $C$.\\
         \textbf{Correctness: }Now we will prove that $C'$ computes $DET_n(X)$. Let $m_1$ and $m_2$ be any two monomials in $DET_n(X)$ computed by $C$. Let $m'_1$ and $m'_2$ be the monomials obtained by applying $\tau$ to the first index of each of the variables in $m_1$ and $m_2$ respectively. The permutations corresponding to $m'_1$ and $m'_2$ are $\tau\circ\sigma_1$ and $\tau\circ\sigma_2$ respectively. 
         \begin{itemize}
          \item  Case 1: $m_1 = m_2$.  We show that $m'_1=m'_2$ in  $C'$. We note that $m_1$ and $m_2$ could be computed by circuits $C_1$ and $C_2$ respectively. Thus, the order of variables appearing in $m_1$ and $m_2$ could be different in general. By construction of $C'$, $x_{i,j}$ is substituted by the variable $x_{\tau(i),j}$. Since $m_1 = m_2$, we get $m'_1=m'_2$.
          \item Case 2: $m_1 \neq m_2$. We show that $m'_1 \neq m'_2$ in $C'$.  Since $m_1 \neq m_2$, there exists a variable $x_{i_1,j_1}$ in $m_1$ and a variable $x_{i_2,j_2}$ in $m_2$ such that $x_{i_1,j_1} \neq x_{i_2,j_2}$. Suppose $j_1 = j_2$, then $i_1 \neq i_2$. Then, $x_{\tau(i_1),j_1} \neq x_{\tau(i_2),j_2}$. This implies $m'_1 \neq m'_2$. Suppose $j_1 \neq j_2$, then $x_{\tau(i_1),j_1} \neq x_{\tau(i_2),j_2}$, which again implies that $m'_1 \neq m'_2$. 
         \end{itemize}

         By construction of $C'$, we note that coefficients of monomials are not affected.  Now we will prove that $C'$ computes $DET_n(X)$. Let $A_X$ be a $n \times n$ matrix where row $i$ contains all variables of the set $X_{i}$. In other words, the entry of $i$-th row and $j$-th column of the matrix $A_X$ is $x_{i,j}$. Let $\beta \in S_n$. By changing $x_{i,j}$ to $x_{\beta(i),j}$, in effect it permutes the rows of $A_X$. In other words, the determinant is equal to the determinant of $P_{\beta}\times A_X$, where $P_{\beta}$ is the $n \times n$ permutation matrix. The entry of $i$-th and $j$-th column of $P_{\beta}$ is 1 iff $j=\beta(i)$ and 0 otherwise. By Fact 1 and 2, we have $det(P_{\beta}\times A_X)=det(P_{\beta})\times det(A_X)=sign(\beta) \times det(A_X).$

         Thus, composing the permutation $\tau$ with $\sigma_1,\sigma_2$ maps different monomials to different monomials and in effect does not change the determinant computed except that the sign changes.
      Note that $sgn(\tau\circ \beta)=sgn(\tau).sgn(\beta)$ (by Fact 3). Therefore, if $sgn(\tau)=-1$, then the coefficients of $m'_1$ and $m'_2$ are the negatives of the coefficients of $m_1$ and $m_2$ respectively.  Therefore, if $sgn(\tau)=-1$, $C'$ computes $DET_n(X)$, as the leaf gate labeled -1 multiplied to the output gate ensures that $C'$ computes $DET_n(X)$. However, the coefficients of $m'_1$ and $m'_2$ are the same as the coefficients of $m_1$ and $m_2$ respectively, if $sgn(\tau)=+1$. In the case that $sgn(\tau)=+1$, there is no need of this leaf gate. In both cases, the polynomial computed by $C'$ is $DET_n(X)$. 
        
        Now we will show that $order(C_j)=(\tau(\sigma_j(1)),\tau(\sigma_j(2)),...,\tau(\sigma_j(n)))$, $j \in \{1,2\}$. The proof is by induction on the depth $d$ of the circuit. We will prove it for $C_1$. The proof is similar for the circuit $C_2$. Recall that $C_1$ is regular set-multilinear circuit w.r.t $\sigma_1$. Let $v$ be a gate in the circuit. We denote polynomial computed at $v$ in $C$ and $C'$  by $f_v$ and $f'_v$ respectively.
        
        \textbf{Base Case:} The base case is any node at depth 0, i.e, a leaf node. Let $\ell$ be any leaf node. Then $f_\ell$ is either a field constant or a variable $x_{i,j}$. If $f_\ell \in F$, then the $order(f_\ell)$ is the empty sequence $()$. As there is no variable in $f_\ell$, there is no change to be made. Therefore, $order(f'_\ell) = ()$, and therefore the claim trivially holds. If $f_\ell$ is a variable $x_{i,j}$, then $order(f_\ell)=(i)=(\sigma_1(k)),$ for some $k \in \{1,2,...,n\}$. We change $x_{i,j}$ to $x_{\tau(i),j}$, which means $order(f'_\ell)=(\tau(\sigma_1(k)))$. 
        
        \textbf{Induction Hypothesis:} Suppose the claim holds for any node at depth $d', 1 \leq d' < d$.
        
        \textbf{Induction Step:} Consider any node $v$ at depth $d'+1$. Let $u$ and $w$ be its left and right children with degrees $d_u,d_w$ respectively. 
        \begin{itemize}
         \item Case 1: $v$ is a sum gate. Thus, $f'_v=f'_u + f'_w$. Then $order(f'_u)=order(f'_w)=order(f'_v)$. 
         \item Case 2: $v$ is a product gate. Thus, $f'_v=f'_u \times f'_w$. Let $0 \le a \le n-d_u-d_w$, where $d_u,d_w$ denote degrees of $f_u,f_w$ respectively.\\      Let $order(f_u)=(\sigma_1(a+1),\sigma_1(a+2),\cdots,\sigma_1(a+d_u))$ and \\$order(f_w)=(\sigma_1(a+d_u+1),\sigma_1(a+d_u+2),\cdots,\sigma_1(a+d_u+d_v))$.
        By IH, $order(f'_u)=(\tau(\sigma_1(a+1)),\tau(\sigma_1(a+2)),...,\tau(\sigma_1(a+d_u)))$, and let $order(f'_w)=(\tau(\sigma_1(a+d_u+1)),\tau(\sigma_1(a+d_u+2)),...,\tau(\sigma_1(a+d_u+d_v)))$. Then $order(f'_v)=(\tau(\sigma_1(a+1)),\cdots,\tau(\sigma_1(a+d_u)),\tau(\sigma_1(a+d_u+1)),\cdots,\tau(\sigma_1(a+d_u+d_v)))$.
        \end{itemize}
  Thus, in both cases, the claim holds. This completes the proof of the lemma.
    \end{proof}
    Unlike Lemma \ref{lem:reverse}, we note that in general this composition operation may not hold for any polynomial $f$ computed by a regular circuit. For example, if $C$ is a regular set-multilinear circuit computing the polynomial $f=x_{1,1}x_{2,0}x_{3,0}x_{4,1}$ then by swapping the 3rd and 4th indices, we get a different polynomial  $f'=x_{1,1}x_{2,0}x_{4,0}x_{3,1}$. Now we have all results needed to the case where the determinant polynomial is computed by a sum of two regular set-multilinear circuits.
   
    \begin{theorem}
  Let $X=\{x_{i,j}\}_{i=1,j=1}^n$. If the determinant polynomial over $X$ is computed by a circuit $C$ of size $s$, where $C$ is the sum of two regular set-multilinear circuits, then the permanent polynomial of degree $\sqrt{n}/2$ can be computed by a regular set-multilinear circuit $C'$ of size polynomial in $n$ and $s$.
\end{theorem}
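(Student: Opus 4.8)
The plan is to chain together the three lemmas already established — the Reversal Lemma, the Composition Lemma, and Theorem 2 (Theorem 10 of \cite{AS10}) — with two applications of the Erd\H{o}s--Szekeres Theorem, exactly along the lines sketched in the informal discussion preceding the statement. Start with $C = C_1^{\sigma_1} + C_2^{\sigma_2}$ of size $s$ computing $DET_n(X)$. By the Composition Lemma applied with $\tau = \sigma_1^{-1}$, we may assume without loss of generality that $\sigma_1$ is the identity permutation, at the cost of adding at most one gate and possibly a sign correction; so $C = C_1^{\mathrm{id}} + C_2^{\sigma}$ for some $\sigma \in S_n$, still computing $DET_n(X)$ (the determinant is invariant under this row-permutation composition as shown in Lemma \ref{lem:composition}).

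Next I would apply Erd\H{o}s--Szekeres to the sequence $(\sigma(1),\dots,\sigma(n))$: it contains a monotone subsequence on an index set $A \subseteq [n]$ with $|A| \geq \sqrt{n}$. Restrict $C$ to the principal submatrix indexed by $A$ by the substitution that sets $x_{i,i'} = 1$ when $i = i' \notin A$... more precisely, for each $i \notin A$ set $x_{i,i} = 1$ and $x_{i,j} = 0 = x_{j,i}$ for $j \neq i$; this leaves each circuit $C_1, C_2$ regular set-multilinear over the partition indexed by $A$ (restricting an interval to a sub-index-set still yields an interval under the induced order), and the resulting circuit $C'$ of size $\leq s$ computes $DET_{\sqrt{n}}$ over the variables $\bigsqcup_{i \in A} X_i$, with $C_1$ now regular w.r.t.\ the identity (the induced order on $A$) and $C_2$ regular w.r.t.\ a permutation that is either increasing or decreasing on $A$. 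If it is decreasing, apply the Reversal Lemma to $C_2$ to turn it into an increasing one without changing the polynomial it computes and without changing size; but then I must also reverse $C_1$ to keep the orders matched — alternatively, and more cleanly, I reverse \emph{both} $C_1$ and $C_2$ simultaneously, which by the Reversal Lemma preserves both computed polynomials and hence the sum, and makes both orders identical. Either way we reach a sum of two regular set-multilinear circuits w.r.t.\ the \emph{same} permutation, computing $DET_m(X')$ with $m = \lfloor \sqrt{n} \rfloor$ and size polynomial in $s$.

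Once the two orders agree, $C$ (after composing with the inverse of that common permutation via the Composition Lemma once more, to make it the identity order) is a single regular set-multilinear circuit w.r.t.\ the identity order computing $DET_m(X')$, which is precisely a non-commutative circuit computing the Cayley determinant $CDET_m$ — the observation recorded in the text that increasing-first-index order $=$ regular w.r.t.\ identity. If $m$ is odd, zero out the last row/column as in the text to pass to $DET_{m-1}$. Then Theorem 2 of \cite{AS10} yields a circuit of size polynomial in $s$ and $m$ (hence polynomial in $s$ and $n$) computing $PERM_{\lfloor m/2 \rfloor}(Y) = PERM_{\sqrt{n}/2}(Y)$; finally I would re-examine whether the output circuit can be taken to be itself regular set-multilinear, or whether we only claim it is a small arithmetic circuit — the theorem statement says "regular set-multilinear circuit $C'$", so I would need to argue that the \cite{AS10} transformation, applied to a regular (identity-order) input, produces a regular set-multilinear output, or re-inspect that construction to confirm it preserves this structure.

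The main obstacle I anticipate is precisely this last point together with the bookkeeping of the second Erd\H{o}s--Szekeres step: the informal discussion does a \emph{second} application of Erd\H{o}s--Szekeres to get down from $\sqrt{n}$ to $n^{1/4}$, and only the Composition Lemma lets us avoid that and keep degree $\sqrt{n}/2$ — so the argument hinges on correctly using Lemma \ref{lem:composition} to normalize $\sigma_1$ to the identity \emph{at the start}, so that after one Erd\H{o}s--Szekeres application on $\sigma_2$ the two orders already coincide on the restricted index set. I would need to double-check that the restriction-to-$A$ operation genuinely commutes with the "regular w.r.t.\ identity" property of $C_1$ and does not disturb it, and that the substitutions realizing the principal-submatrix projection are legitimate set-multilinear substitutions (values in $X \cup \{0,1\}$, one per original variable). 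The determinant-identities (Facts 1--3) and the Reversal/Composition lemmas handle all the sign and ordering issues; the remaining work is verifying that structural invariants (regularity, set-multilinearity, size blow-up) are preserved through each of the four steps.
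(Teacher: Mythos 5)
Your overall route is the same as the paper's: use Lemma \ref{lem:composition} with $\tau=\sigma_1^{-1}$ to normalize $\sigma_1$ to the identity first, apply Erd\H{o}s--Szekeres once to $\sigma_2$, project to the principal submatrix indexed by the monotone set $A$, fix parity, and invoke Theorem 10 of \cite{AS10}; this is exactly how the paper avoids the second Erd\H{o}s--Szekeres step and keeps degree $\sqrt{n}/2$. However, your treatment of the decreasing case contains a genuine error. After the normalization, the restricted orders on $A$ are: increasing for $C_1$ (since $\sigma_1$ is the identity) and decreasing for $C_2$. Reversing \emph{only} $C_2$ via Lemma \ref{lem:reverse} already makes both restricted orders increasing, i.e.\ it is precisely the move that matches them --- your worry that you ``must also reverse $C_1$ to keep the orders matched'' is backwards. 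Worse, your preferred ``cleaner'' alternative of reversing \emph{both} $C_1$ and $C_2$ does not ``make both orders identical'': it turns the restricted order of $C_1^{rev}$ into decreasing on $A$ while that of $C_2^{rev}$ becomes increasing, so the two orders are again mirror images of each other and the sum is not a single regular circuit; you would have to undo one of the reversals, landing back at the paper's step (reverse exactly the one sub-circuit whose restricted order is decreasing). As written, this step of your argument fails, although the fix is immediate and is what the paper does.

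Two smaller points. First, after the projection in the increasing case no further application of Lemma \ref{lem:composition} is needed: once the variable classes indexed by $A$ are renamed in increasing order, both sub-circuits are already regular with respect to the identity in $S_{\sqrt{n}}$, so your extra composition step is harmless but superfluous. Second, your caveat about whether the \cite{AS10} transformation outputs a circuit that is itself regular set-multilinear is a fair observation, but the paper's proof does not address it either; it simply invokes Theorem 10 of \cite{AS10} at that point, so this is not a deviation from (nor a gap relative to) the paper's argument.
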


\begin{proof}
Let $C=C^{\sigma_1}_1 + C^{\sigma_2}_2$, where the circuits $C^{\sigma_1}_1$, $C^{\sigma_2}_2$ are \emph{regular set-multilinear circuits} w.r.t $\sigma_1,\sigma_2 \in S_n$ respectively. We show that there is an efficient transformation that converts the given circuit $C$ to another circuit $C'$ computing the permanent polynomial of degree $\sqrt{n}/2$. 

Without loss of generality, we can assume that $\sigma_1$ is the identity permutation. This is because otherwise by Lemma \ref{lem:composition}  we can get a new circuit $\hat{C}=C^{\sigma_1^{-1}\circ \sigma_1}_1 + C^{\sigma_1^{-1}\circ\sigma_2}_2$ with $\sigma_1^{-1}\circ \sigma_1, \sigma_1^{-1}\circ\sigma_2 \in S_n$ as the two permutations used. This does not increase the circuit size. By the Erd{\"o}s-Szekeres Theorem, there is a  monotone subsequence of length $\sqrt{n}$.
 Let  $A$ be the set of all such indices. 
 \begin{itemize}
  \item Case 1: Subsequence is increasing. As $\sigma_1$ is the identity, the same subsequence of indices in $\sigma_1$ is also increasing.  We do the following substitutions. For all $j \notin A$, set $x_{j,j}=1$ and  for all $i \in [n]$ and $i \neq j$, set $x_{j,i}=0$ and $x_{i,j}=0$.  After this substitutions, the circuit computes the determinant polynomial over  $A'=\bigsqcup_{i \in A }X_i$ and the order of the subsequence in both $C_1$  and $C_2$ are the same.
  We rename the variable sets in $A'$ as follows: if $i_1 \in A_1$ is the $j$-th lowest index in the subsequence then we  rename $X_{i_1}$ to $X_j$, and for all $k$, rename $X_{i_1,k}$ to $X_{j,k}$. The modified circuit $C'$ computes the determinant polynomial over $\hat{X}=\bigsqcup_{i \in [n^{1/2}]}X_i$ and it is regular w.r.t the identity permutation in $S_{\sqrt{n}}$.

 \item Case 2: Subsequence is decreasing. Then by Lemma \ref{lem:reverse}, we modify  the circuit $C^{\sigma2}_2$  to get a new circuit computing the same polynomial as computed by the circuit $C^{\sigma_2}_2$ but the new circuit is regular set-multilinear w.r.t the permutation $\sigma_2^{rev}=(\sigma_2(n),\sigma_2(n-1),\cdots, \sigma_2(1))$. We note that, by applying Lemma \ref{lem:reverse}, no sign change occurs to the determinant polynomial. 
 In this modified (second) circuit, the corresponding  subsequence now becomes increasing.  This reduces this case to case 1.  
  \end{itemize}
  
 Thus, after this modifications we have a new regular circuit $C'$, that computes the determinant polynomial of degree $\sqrt{n}$, w.r.t the identity permutation. 
 If $\lfloor \sqrt{n} \rfloor$ is not an even number then we substitute variables in $X_{\sqrt{n}}$ as explained before. Thus, $C'$ computes the determinant polynomial of even degree. Now by the result of \cite{AS10}, we can compute the permanent polynomial of degree $\frac{\sqrt{n}}{2}$ by a circuit of size polynomial in $s$ and $n$.  This completes the proof of the theorem.
\end{proof}

\section{Hardness of the Determinant: Sum of Constantly-Many Regular Set-Multilinear circuits}
 In this section, we show that if the determinant polynomial $DET_n(X)$ is computed by a sum of constantly many regular set-multilinear circuits then the permanent polynomial $PERM_{n^\epsilon/2}(X)$, $\epsilon>0$ depends on $k$, computed a regular circuit. The proof of the following lemma is omitted due to lack of space. This is a generalization of the  (composition) Lemma \ref{lem:composition} but idea of the proof is similar.
 
  \begin{lemma}
     Let $C=C_1+C_2+\cdots+C_k$ be a sum of $k$ regular set-multilinear circuits such that $C$ computes $DET_n(X)$. Let $C_1,C_2,...,C_k$ be regular set-multilinear w.r.t $\sigma_1,\sigma_2,...,\sigma_k$ respectively, where each $\sigma_i \in S_n$. For any $\tau \in S_n$, let $C_1^{\tau(\sigma_1)}, C_2^{\tau(\sigma_2)}, ...., C_k^{\tau(\sigma_k)}$ be the circuits obtained by substituting $x_{\tau(i),j}$ for each variable in $x_{i,j}$ in each of the $k$ circuits. Let $C'$ be the sum of $C_1^{\tau(\sigma_1)}, C_2^{\tau(\sigma_2)}, ...., C_k^{\tau(\sigma_k)}$. Then $C'$ also computes $DET_n(X)$. Moreover, the size of $C'$ is at most one more than the size of $C$.    
    \end{lemma}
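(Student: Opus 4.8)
The plan is to generalize the Composition Lemma (Lemma~\ref{lem:composition}) essentially verbatim, since the correctness of the transformation depends only on two facts: that substituting $x_{\tau(i),j}$ for $x_{i,j}$ simultaneously in \emph{every} summand permutes the rows of the underlying coefficient matrix $A_X$ in the same way, and that each summand individually remains a well-formed regular set-multilinear circuit (now with respect to the composed permutation $\tau\circ\sigma_i$). Neither of these depends on the number of summands $k$ being two. First I would observe that the substitution $x_{i,j}\mapsto x_{\tau(i),j}$ is a linear change of variables applied uniformly across the circuit $C=C_1+\cdots+C_k$, so $C'$ computes the polynomial obtained from $DET_n(X)$ by that same substitution; by Facts~1 and~2 this equals $\det(P_\tau\times A_X)=\operatorname{sign}(\tau)\cdot DET_n(X)$. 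Hence appending a single leaf labelled $-1$ at the root when $\operatorname{sign}(\tau)=-1$ (and nothing when $\operatorname{sign}(\tau)=+1$) yields a circuit computing exactly $DET_n(X)$, with size at most one more than that of $C$.

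Next I would verify the regularity claim: for each $j\in[k]$, the circuit $C_j^{\tau(\sigma_j)}$ is regular set-multilinear with respect to $\tau\circ\sigma_j$. This is the induction on circuit depth carried out in the proof of Lemma~\ref{lem:composition}, applied to each $C_j$ separately. The base case handles leaves (a field constant has empty order; a variable $x_{i,j}=x_{\sigma_j(\ell),j}$ becomes $x_{\tau(\sigma_j(\ell)),j}$, so its order is $(\tau(\sigma_j(\ell)))$); the inductive step handles sum gates (order is preserved from the common order of the children) and product gates (the order of the parent is the concatenation of the children's orders, and concatenation commutes with applying $\tau$ entrywise). Since $\tau\circ\sigma_j$ is again a permutation in $S_n$, intervals with respect to $\sigma_j$ map bijectively to intervals with respect to $\tau\circ\sigma_j$, so every gate of $C_j^{\tau(\sigma_j)}$ is associated with a genuine interval in the new ordering. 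Because the change of variables is a bijection on each part $X_i$, no two distinct monomials collapse and no coefficients merge, so the set-multilinear structure and all coefficients are preserved on each summand.

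The main point to be careful about — though it is not a real obstacle — is that the sign correction and the variable substitution must be applied to the \emph{sum} $C$ as a whole: one cannot fix up each $C_j$ independently, because the $C_j$ need not individually compute set-multilinear polynomials with consistent signs, only their sum equals $DET_n(X)$. But since $x_{i,j}\mapsto x_{\tau(i),j}$ is the same substitution in every summand and the single $-1$ leaf multiplies the shared output gate, linearity of the substitution and of the sum gate make this immediate: $C' = \operatorname{sign}(\tau)\cdot\sum_j C_j^{\tau(\sigma_j)}$ as polynomials, which equals $DET_n(X)$. I expect the entire argument to run in parallel with the $k=2$ case, with the only change being that the "different monomials map to different monomials" bookkeeping (Case~1 and Case~2 of the earlier proof) is stated once and holds across all $k$ summands simultaneously because the substitution is global. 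Thus the lemma follows, and the size bound of "at most one more than the size of $C$" is exactly the cost of the optional $-1$ leaf and one multiplication at the root.
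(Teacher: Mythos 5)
Your proposal is correct and follows essentially the same route as the paper, which omits the proof of this lemma precisely because it is the verbatim generalization of the Composition Lemma (Lemma~\ref{lem:composition}) that you carry out: the global substitution $x_{i,j}\mapsto x_{\tau(i),j}$ multiplies the determinant by $\operatorname{sign}(\tau)$ via $\det(P_\tau A_X)$, a single $-1$ leaf at the root fixes the sign at a cost of one gate, and the depth induction shows each summand becomes regular w.r.t.\ $\tau\circ\sigma_j$. Your remark that the sign correction and monomial bookkeeping must be applied to the sum as a whole, not summand by summand, is a sound and welcome clarification consistent with the paper's intent.
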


 Without loss of generality we can assume that for each $i\neq j \in [k]$, $\sigma_i \neq \sigma_j$. Otherwise, we can combine all $C_i$'s which use same $\sigma_i$ into a single $C_i$ using addition gates and get a circuit $C$ that is a sum of $k'$ regular set-multilinear circuits, where $k'<k$. Therefore, $C$ is the sum of $k'$ regular set-multilinear circuits such that no two permutations used by any two of these $k'$ circuits is same. We call such a circuit $C$ as \emph{$k'$-regular circuit}.
 
  \begin{theorem}
 Let $C$ be the sum of $k$-many regular set-multilinear circuits, of size $s$, computing the determinant polynomial $DET_n(X)$. Then there exists a regular set-multilinear circuit whose size is at most $s+1$ that computes the determinant polynomial $DET_{n^{\epsilon}}(X')$, where $X'=\{x_{i,j}\}_{i=1,j=1}^{n^{\epsilon}}$ and  $\epsilon \geq 1/2^{k-1}$.
\end{theorem}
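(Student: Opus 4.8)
The plan is to iterate the sum-of-two argument from the previous section, removing one of the $k$ permutations in each round by a single application of the Erd{\"o}s-Szekeres Theorem \cite{ES35}, while arranging that the whole process costs at most one extra gate. As a preliminary simplification I would assume that $\sigma_1,\dots,\sigma_k$ are pairwise distinct, since summands that share a permutation can be merged into one regular circuit at no cost, and this only decreases $k$, which only helps because $1/2^{k-1}$ grows as $k$ shrinks. I would then invoke the generalized Composition Lemma stated above with $\tau=\sigma_1^{-1}$ to pass, at the cost of at most one leaf labeled $-1$, to a circuit of size $\le s+1$ that is still a sum of $k$ regular set-multilinear circuits computing $DET_n(X)$ but whose first summand is regular with respect to the identity permutation. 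This is the only step in the whole argument that adds a gate.

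The main loop would maintain the invariant that we have a sum $C_1+\dots+C_\ell$ of regular set-multilinear circuits computing $DET_N(X)$ with $C_1$ regular w.r.t.\ the identity of $S_N$. One round proceeds as follows. First reassociate the top additions so that some gate $g$ computes $C_1+C_2$. Apply Erd{\"o}s-Szekeres to the sequence $(\sigma_2(1),\dots,\sigma_2(N))$ to obtain a monotone subsequence of length $\ge\sqrt N$; if it is decreasing, apply the Reversal Lemma (Lemma~\ref{lem:reverse}) to $C_2$ alone — this changes neither the gate count nor the polynomial computed by $C_2$, replaces $\sigma_2$ by $\sigma_2^{rev}$, and makes the subsequence increasing, so the sum still computes $DET_N(X)$. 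Let $A$ be the set of row-indices occurring in the (now increasing) subsequence; since $\sigma_1$ is the identity, these indices are increasing in $\sigma_1$ as well. Then restrict to the principal submatrix on $A$ by setting $x_{j,j}=1$ and $x_{j,i}=x_{i,j}=0$ for $j\notin A$, $i\ne j$ (no gates added), rename the surviving blocks $\{X_i\}_{i\in A}$ to $X_1,\dots,X_{|A|}$ in increasing order of index, and collapse $C_1,C_2$ at $g$. The output of the round is a sum of $\ell-1$ regular set-multilinear circuits of the same size computing $DET_{|A|}(X)$ with $|A|\ge\sqrt N$, and with the first summand again regular w.r.t.\ the identity.

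Running the loop $k-1$ times starting from $N=n$ brings the number of summands down to $1$ and the dimension down to at least $n^{1/2^{k-1}}$ (ignoring floors, as the paper does), yielding a single regular set-multilinear circuit of size $\le s+1$ computing $DET_{n^\epsilon}(X')$ with $\epsilon\ge 1/2^{k-1}$; if the $\sigma_i$ are not all distinct, the loop terminates earlier with a strictly larger exponent.

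The step I expect to be the real obstacle is the correctness of a single round, specifically the bookkeeping around restriction-and-renaming: one must check that intersecting a $\sigma$-interval with $A$ and then renaming always yields a genuine interval, so that regularity is preserved for \emph{every} summand; and — this is exactly why an \emph{increasing} subsequence is extracted — that after renaming, the restriction of $\sigma_2$ becomes precisely the identity of $S_{|A|}$, matching that of $\sigma_1$, so that the two children of $g$ carry the same ordering and $g$ becomes a legitimate regular sum gate. Everything else (the reversal step leaving the polynomial unchanged, the statement concerning only the determinant, and the fact that only the initial composition ever adds a gate, so the final size is $\le s+1$ rather than $s+O(k)$) should be routine once this verification is in place.
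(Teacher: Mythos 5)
Your proposal is correct and follows essentially the same route as the paper's proof: one initial application of the Composition Lemma with $\tau=\sigma_1^{-1}$ (the only step costing a gate), followed by $k-1$ rounds of Erd\H{o}s--Szekeres on the next non-identity permutation, the Reversal Lemma for the decreasing case, restriction to the principal submatrix, renaming, and merging the now identity-ordered summands. The verification you flag (that restricting a $\sigma$-interval to $A$ and renaming yields an interval of the restricted ordering, and that the increasing subsequence makes the restricted $\sigma_2$ the identity) is exactly the bookkeeping the paper relies on, so no further comparison is needed.
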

\begin{proof} 
	Let $C=C^{\sigma_1}_1 + C^{\sigma_2}_2+\cdots+C^{\sigma_k}_k$, where the circuits $C^{\sigma_i}_i$ are regular set-multilinear circuits w.r.t $\sigma_i \in S_n$, $i \in [k]$. We show that there is an efficient transformation that converts the given circuit $C$ to another circuit $C'$ computing the determinant polynomial of degree $n^\epsilon$, $\epsilon = 1/2^{k-1}$ . 
	
	Without loss of generality, we can assume that $\sigma_1$ is the identity permutation. This is because otherwise by Lemma \ref{lem:composition}  we can get a new circuit $\hat{C}=\hat{C}_1 + \hat{C}_2 + \cdots + \hat{C}_{k'}$ 
	where $\hat{C}_i$ is a regular set-multilinear circuit w.r.t the permutation $\sigma_1^{-1}\circ \sigma_i \in S_n$, where $i \in [k]$. We note that $\hat{C}$ computes the same polynomial as circuit $C$ and both circuits have the same size.

	Denote by $C^{(\ell)}$ the circuit obtained after the $\ell$-th iteration, where $0 \leq \ell < k$. We will show that $C^{(\ell)}$ computes the determinant polynomial of degree  $n^{1/2^\ell}$  and $C^{(\ell)}$ is a $(k-\ell)$-regular circuit.

	At iteration 0, this condition holds, as $C^{(0)}=C $ computes the determinant polynomial over $X$ and $C^{(0)}$ is a $k$-regular circuit.

	Suppose the condition is true for some $m$, where $0 \leq m < k$. We will show that $C^{(m+1)}$ computes the determinant polynomial of degree $n^{1/2^{m+1}}$ and $C^{(m+1)}$ is a $k-(m+1)$-regular circuit. Note that $C_1,C_2,\cdots,C_{k}$ have been modified during the first $m$ iterations. Let us denote these modified circuits at the end of the $m$-th iteration by $C'_1,C'_2,\cdots,C'_{k}$. Thus, $C^{(m)}=C'_1+C'_2+\cdots+C'_{k}$.
	
	Without loss of generality, we will assume that each variable in the determinant computed by $C^{(m)}$ has both its indices in $X^{(m)}=\{1,2,\cdots,k_m\}$, where $k_m = n^{\frac{1}{2^m}}$. We note that the first $m$ regular set-multilinear circuits $C'_1,C'_2,\cdots,C'_m$ are regular w.r.t identity permutation $id \in S_{k_m}$. 
	As noted before, we can combine all $C'_i$'s which has same $\sigma_i$ as single $C_i$ using addition gates.
	By Erd{\"o}s-Szekeres Theorem \cite{ES35}, in $\sigma'_{m+1}$, there is a monotone subsequence of length  $n^{\frac{1}{2^{m+1}}}$. There are two cases to handle based on whether the subsequence is increasing or decreasing.
	
	\begin{itemize}
		\item Case 1: Suppose $\sigma'_{m+1}$ has an increasing subsequence. Let $S^{(m+1)}=\{i_1,i_2,\cdots,i_{k_{m+1}}\}$ be the set of indices in this increasing subsequence, where $k_{m+1} = n^{\frac{1}{2^{m+1}}}$. We do the following substitutions. For all $j \notin S^{(m+1)}$, set $x_{j,j}=1$ and  for all $i \in [k_m]$ and $i \neq j$, set $x_{j,i}=0$ and $x_{i,j}=0$.  After these substitutions, the circuit computes the determinant polynomial over  $A'=\bigsqcup_{i \in S^{(m+1)} }X_i$.
  We rename the variable sets in $A'$ as follows: if $i_1 \in S^{(m+1)}$ is the $j$-th lowest index in the subsequence then we  rename $X_{i_1}$ to $X_j$, and for all $k$, rename $x_{i_1,k}$ to $x_{j,k}$. The modified circuit $C^{(m+1)}$ computes the determinant polynomial over $\hat{X}=\bigsqcup_{i \in [k_{m+1}]}X_i$. It is clear that $\sigma'_1=\sigma'_2=\cdots=\sigma'_m=\sigma'_{m+1}=identity$. This shows that  $C^{(m+1)}$ is a $k-(m+1)$-regular circuit.
	      \item Case 2: Suppose $\sigma'_{m+1}$ has only a decreasing subsequence,
	      then, we modify  the sub-circuit $C'_{m+1}$ by Lemma \ref{lem:reverse}  to get a new circuit computing the same polynomial as computed by the $(m+1)$-th sub-circuit in the previous iteration but the new circuit is regular set-multilinear w.r.t the permutation $\sigma_{m+1}^{rev}=(\sigma'_{m+1}(k_m),\sigma'_{m+1}(k_m-1),\cdots, \sigma'_{m+1}(1))$. Note that after reversal operation, Lemma \ref{lem:reverse} guarantees that the polynomial computed by the circuit $C'_{m+1}$ does not change. In $\sigma_{m+1}^{rev}$, the corresponding  subsequence now becomes increasing. It is clear that the same sequence of indices in $\sigma'_1,\sigma'_2,\cdots,\sigma'_m$ are also increasing. This reduces this case to case 1.
		 
	\end{itemize}

Clearly, $C^{(m+1)}$, obtained at the end of the $(m+1)$-th iteration, computes the determinant over $X^{(m+1)}=\{x_{i,j} \mid  i,j \in S^{(m+1)}\}$.  This implies that at the end of $(k-1)$-th iteration, $C^{(k-1)}$ computes the determinant of  degree $n^{\epsilon}$ over $X^{(k-1)}$, where $\epsilon=1/2^{k-1}$. Moreover, $C^{(k-1)}$ is a $1$-regular set-multilinear circuit. This completes the proof of the theorem. 
\end{proof}

 Let $d$ be the degree of the determinant polynomial computed by the circuit $C^{(k-1)}$ in the above theorem. Clearly, $d \geq n^{\frac{1}{2^{k-1}}}$. If $\lfloor d \rfloor $ is not an even number then like before we substitute variables in the set $X_{\lfloor d \rfloor}$ such that the modified circuit computes the determinant of even degree  $\lfloor d \rfloor-1$.  Now by the result of \cite{AS10}, we can compute the permanent polynomial of degree $d/2$ by a circuit of size polynomial in $s$ and $n$. 
 Thus, we get the following main result as a corollary.
 \begin{corollary}
    Let $C$ be the sum of $k$-many regular set-multilinear circuits computing the determinant polynomial $DET_n(X)$. Let $s$ denote the size of the circuit $C$. Then there exists a regular set-multilinear circuit $\widehat{C}$ computing the permanent polynomial $PERM_{n^{\epsilon}/2}$, where $\epsilon=1/2^{k-1}$.
  Moreover, the size of $\hat{C}$ is polynomial in $s$ and $n$.
 \end{corollary}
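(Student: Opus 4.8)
The plan is to chain the theorem just proved together with a parity normalization and the reduction of \cite{AS10}, so the argument is essentially a wrap-up. First I would invoke the preceding theorem on $C$: since $C$ is a sum of $k$ regular set-multilinear circuits of size $s$ computing $DET_n(X)$, the theorem produces, after $k-1$ rounds of Erd{\"o}s-Szekeres combined with Lemmas \ref{lem:reverse} and \ref{lem:composition}, a $1$-regular set-multilinear circuit $C^{(k-1)}$ of size at most $s+1$ computing $DET_d(X')$ for some $d \geq n^{\epsilon}$ with $\epsilon = 1/2^{k-1}$. Applying the Composition Lemma \ref{lem:composition} one more time with the inverse of the single permutation witnessing the regularity of $C^{(k-1)}$ (which costs at most one extra gate), I may assume $C^{(k-1)}$ is regular set-multilinear with respect to the identity permutation in $S_d$.

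The next step is to cross over to the non-commutative model of \cite{AS10}. As already observed in the excerpt, a regular set-multilinear circuit with respect to the identity permutation computing $DET_d$ is, read syntactically, a non-commutative circuit computing the Cayley determinant $CDET_d$: at every product gate the left child (lower row indices) is multiplied before the right child (higher row indices), so each monomial is produced with its first indices in increasing order. Before invoking \cite{AS10} I must make the degree even. If $\lfloor d\rfloor$ is odd, I substitute $x_{\lfloor d\rfloor,\lfloor d\rfloor}=1$ and $x_{\lfloor d\rfloor,i}=x_{i,\lfloor d\rfloor}=0$ for all $i\neq\lfloor d\rfloor$; this is an honest projection, it keeps the circuit regular set-multilinear with respect to the identity on the remaining rows, and it makes the circuit compute $DET_{\lfloor d\rfloor-1}$, whose degree is even. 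Write the resulting even degree as $2m$; then $m \geq (n^{\epsilon}-1)/2$, and after adjusting constants (or, equivalently, stopping the Erd{\"o}s-Szekeres recursion one step earlier) one has $m \geq n^{\epsilon}/2$.

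Now I apply Theorem 10 of \cite{AS10} to this non-commutative circuit computing $CDET_{2m}$, obtaining a circuit $\widehat{C}$ of size polynomial in $s$ and $n$ that computes the Cayley permanent $PERM_{m}(Y)$. The Cayley permanent circuit so produced is again a non-commutative circuit whose monomials carry their first indices in increasing order, so by the same correspondence it is a regular set-multilinear circuit with respect to the identity permutation. Finally, if $m > n^{\epsilon}/2$, I trim the extra rows and columns by exactly the same $0/1$ substitution used in the parity step (isolating the top-left minor, which for the permanent is again an honest projection), yielding a regular set-multilinear circuit computing $PERM_{n^{\epsilon}/2}$ of size polynomial in $s$ and $n$, which is the claimed $\widehat{C}$.

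I expect the only genuinely delicate point to be the bookkeeping that keeps every intermediate circuit regular set-multilinear while passing through the non-commutative reduction of \cite{AS10}: one must check that the substitutions used for parity and for shrinking the dimension are genuine projections, so that they neither destroy set-multilinearity nor disturb the interval structure with respect to the identity ordering, and that the circuit returned by \cite{AS10} indeed obeys the increasing-first-index convention so that it can be read back as a regular set-multilinear circuit. Everything else is routine arithmetic with the degree parameters and floors, already handled loosely in the paragraph preceding the statement.
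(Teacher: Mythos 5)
Your proposal is correct and follows essentially the same route as the paper: invoke the preceding theorem to obtain a $1$-regular circuit computing $DET_d$ with $d\geq n^{1/2^{k-1}}$, fix the parity of the degree by the standard $0/1$ projection, and apply Theorem~10 of \cite{AS10} to the resulting Cayley-determinant circuit, reading the output back as a regular set-multilinear circuit. Your extra care about re-normalizing to the identity permutation, trimming to degree exactly $n^{\epsilon}/2$, and checking that the projections preserve regularity only makes explicit bookkeeping that the paper treats loosely (it explicitly ignores floors), so there is no substantive difference.
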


We note that to compute the permanent polynomial of degree $n$, we need to consider the determinant polynomial of degree $n^{2^{k-1}}$ computed by a $k$-regular circuit. So, our methods need $k$ to be a constant.

\section{Discussion}
In this paper we studied the complexity of computing the determinant polynomial using sum of constant number of regular set-multilinear circuits. We showed that computing the determinant in this model is at least as hard as computing the commutative permanent polynomial.
An interesting open question is whether our results can be extended to the sum of a non-constant (some function of the degree of the determinant) number of regular set-multilinear circuits. Another question is: What is  the complexity of computing the determinant polynomial using set-multilinear circuits?. This question was also raised in \cite{AR16}.


\end{document}